\def\Ap{\mathsf{Ap}}
\def\sem#1{ \llbracket #1 \rrbracket }
\def\sabotage{\crossupsidedown}
\def\tuple#1{\langle #1 \rangle}
\def\set#1{\{#1\}}
\def\next{\mathsf{X}\,}
\def\until{\,\mathsf{U}\,}
\def\release{\,\mathsf{R}\,}
\def\globaly{\mathsf{G}\,}
\def\eventualy{\mathsf{F}\,}
\def\M{\mathfrak{M}}
\def\N{\mathfrak{N}}
\def\strat{\mathfrak{S}}
\def\path{\mathcal{P}}
\def\strat{\mathfrak{S}}
\def\naww#1{\langle   #1  \rangle }
\newtheorem{definition}{Definition}
\newtheorem{proposition}{Proposition}
\newtheorem{thm}{Theorem}
\newtheorem{lemma}{Lemma}
\let\abbrv\textsf 
\def\naww#1{\langle   #1  \rangle }
\algnewcommand\algorithmicswitch{\textbf{switch}}
\algnewcommand\algorithmiccase{\textbf{case}}
\title{Probabilistic Obstruction Temporal Logic: \\ a Probabilistic Logic to Reason about Dynamic Models}
\author{
    Jean Leneutre,
    Vadim Malvone,
    James Ortiz
}
\begin{document}

\maketitle

\begin{abstract}
In this paper, we propose a novel formalism called Probabilistic Obstruction Temporal Logic (\abbrv{POTL}), which extends Obstruction Logic (\abbrv{OL}) by incorporating probabilistic elements. 
\abbrv{POTL} provides a robust framework for reasoning about the probabilistic behaviors and strategic interactions between attackers and defenders
in environments where probabilistic events influence outcomes. 
We explore the model checking complexity of \abbrv{POTL} and demonstrate that it is not higher than that of Probabilistic Computation Tree Logic (\abbrv{PCTL}), making it both expressive and computationally feasible for cybersecurity and privacy applications. 
\end{abstract}

%
\section{Introduction}

Understanding and quantifying uncertainty is essential in cybersecurity, and probability theory offers a robust framework for this purpose, making it particularly valuable for risk analysis. As digital systems grow increasingly complex and dynamic, effectively assessing and managing risks becomes more challenging. Probability theory allows organizations to model the likelihood of various cyber threats, such as hacking attempts, data breaches, and software vulnerabilities, which are inherently uncertain and variable. Cybersecurity professionals can estimate the likelihood of these threats materializing and assess their potential impact on systems by applying probabilistic and non-probabilistic formalisms.

Researchers have developed various solutions over the past fifty years, with formal methods emerging as a notable success. These techniques allow for the verification of system correctness by checking if a mathematical model meets the formalized desired behavior. Notably, traditional formal approaches like model checking \cite{BaierKatoen08}, initially designed for monolithic systems, have been effectively adapted to manage open and Multi-Agent Systems (\abbrv{MAS}). In recent years, the study of \abbrv{MAS} has garnered significant attention due to its wide-ranging applications in fields such as cybersecurity, robotics, and distributed computing. \abbrv{MAS} consists of two or more interacting agents, each capable of making autonomous decisions. These systems often operate in dynamic and uncertain environments, necessitating robust formal verification techniques to ensure their reliability and correctness. 

An important logic in the context of \abbrv{MAS} is Alternating-time Temporal Logic (\abbrv{ATL}) \cite{AHK02}. The latter extends \abbrv{CTL} \cite{CE81} by introducing strategic modalities, enabling the specification of properties that involve the strategic abilities of agents. \abbrv{ATL} can express whether a group of agents can achieve a certain goal regardless of the actions of other agents, making it a powerful tool for reasoning about cooperation and competition in \abbrv{MAS}. 

Another relevant formalism in this area is Obstruction Logic (\abbrv{OL}) \cite{CLM23}, which focuses on obstructions in two-player games. In \abbrv{OL}, one player, called the Demon, can temporarily disable edges in the graph as long as their total weight remains below a specified natural number, thereby preventing the other agent from achieving its temporal goal. As illustrated in their paper, \abbrv{OL} can be well-suited for representing cybersecurity problems, where a defender can activate defense mechanisms (by disabling edges) and an attacker aims to access private resources through a sequence of atomic attacks. 

In this context, a key aspect when performing cybersecurity risk analysis is to assess the \textit{likelihood} (or probability) of success of the attack scenarios. However, \abbrv{OL} did not address this aspect, where no probabilistic concepts were introduced. For the above reasons, in this paper, we present Probabilistic Obstruction Temporal Logic (\abbrv{POTL}), a logic that extends \abbrv{OL} into a probabilistic context. \abbrv{POTL} offers a comprehensive framework for analyzing the probabilistic behaviors and strategic interactions between attackers and defenders in scenarios where probabilistic events influence outcomes. We investigate the model checking complexity of \abbrv{POTL} and show that it is comparable to that of Probabilistic Computation Tree Logic, ensuring that \abbrv{POTL} remains both expressive and computationally practical for cybersecurity and privacy applications.
\paragraph{Structure of the work.} The contribution is structured as follows. Theoretical background is presented in Section~\ref{sec:back}. In Section~\ref{sec:SyntaxandSemantics}, we present the syntax and the semantics of our new logic, called Probabilistic  Obstruction Temporal Logic (\abbrv{POTL}). In Section~\ref{sec:mc}, we show our model checking  algorithm and prove that the model checking problem for \abbrv{POTL} is decidable in polyonimal-time.  
In section~\ref{sec:example}, we present an illustrative example related to the cybersecurity analysis. In Section~\ref{sec:relwork}, we compare our approach to related work. Finally, Section~\ref{sec:end}  concludes and presents possible future directions.

\section{Background}
\label{sec:back}

In this section, we discuss the basic notions that are used in the technical part of this paper.  

\paragraph{General Concepts.} 
Let $\mathbb{N}$ be the set of natural numbers, we refer to the set of natural numbers containing $0$ as $\mathbb{N}_{\ge 0}$, $\mathbb{R}_{\ge 0}$ the set of non-negative reals and $\mathbb{Z}$ the set of integers.  
Let $X$ and $Y$ be two sets and $|X|$ denotes its cardinality. The set operations of intersection, union, complementation, set difference, and Cartesian product are denoted $X$ $\cap$ $Y$, $X$ $\cup$ $Y$, $\overline{X}$, $X$ $\setminus$ $Y$, and $X$ $\times$ $Y$, respectively. Inclusion and strict inclusion are denoted $X$ $\subseteq$ $Y$ and $X$ $\subset$ $Y$, respectively. The empty set is denoted $\emptyset$. Let $\pi=x_1,\ldots,x_n$ be a finite sequence, $last(\pi)$ denotes the last element $x_n$ of $\pi$. 

\paragraph{Probability Distribution and Space.} Let $Q$ be a finite set and $\mu : Q \to [0, 1]$ be a probability distribution function over $Q$ such that $\sum_{q \in Q}$ $\mu(q) = 1$. We denote by $\mathcal{D}(Q)$ the set of all such distributions over $Q$. For a given $\mu$ $\in$ $\mathcal{D}(Q)$, $supp(\mu)$ = $\{q \in Q \mid \mu(q) > 0\}$ is called the support of $\mu$. 
The standard notation of a probability space is a triple $(\Omega, \mathcal{F}, \abbrv{Pr})$, where $\Omega$ is a sample space that represents all possible outcomes, $\mathcal{F} \subseteq \mathcal{P}(\Omega) = 2^\Omega$ is a $\sigma$-algebra over $\Omega$, i.e., it includes the empty subset, and it is closed under countable unions and complement, and $\abbrv{Pr}$: $\mathcal{F} \to [0, 1]$ is a probability measure over $(\Omega, \mathcal{F})$. We denote the set of all finite and infinite sequences of elements of $Q$ by $Q^{+}$ and $Q^{*}$, respectively.

\paragraph{Attack Graphs and Moving Target Defense Mechanisms.} 



A malicious attack is defined as an attempt by an attacker to gain unauthorized access to resources or compromise the integrity of the system. In this context, the Attack Graph (\abbrv{AG}) \cite{KK2016} is a widely recognized and increasingly popular attack model. By leveraging an AG, it is possible to model interactions between an attacker and a defender who dynamically deploys Moving Target Defense (\abbrv{MTD}) mechanisms~\cite{Cho2020}.
\abbrv{MTD} mechanisms, such as Address Space Layout Randomization (\abbrv{ASLR})~\cite{app9142928}, are active defenses that use partial system reconfiguration to alter the attack surface and reduce the chances of success of the attack. However, activating an MTD countermeasure impacts system performance: during reconfiguration, system services may be partially or completely unavailable. Thus, it is crucial to select MTD deployment strategies that minimize both residual cybersecurity risks and the negative impact on system performance. However, despite the progress made in the field of \abbrv{AG} \cite{KK2016,CattaLM23} none of them takes into account some of the uncertainties in the network. Probabilistic Attack Graphs (\abbrv{PAG}) are \abbrv{AG} enriched with 
probabilities that model the likelihood of compromise of each node in the graph based on their specific characteristics \cite{LHJ22, MSK23}. 


\paragraph{Kripke Structure and Markov Chain. }

A \abbrv{PAG} can be viewed as a Probabilistic Kripke Structure (\abbrv{PKS}).
Now, we will formally define \abbrv{PKS}, the Kripke structure that is used to represent all the possible attacks on a networked system.

\begin{definition}[\abbrv{Kripke Structure}] A Kripke Structure (\abbrv{KS}) over a set $\Ap$ of atomic propositions is a tuple $\mathcal{K}$ = $\tuple{Q, q_0, R, \mathcal{L}}$ where $Q$ is a  finite, non-empty set of states, $q_0$ $\in$ $Q$ is the initial state, $R\subseteq Q\times Q$ is a binary serial relation over $Q$ (i.e., for any $q\in Q$ there is a $q'\in Q$ such that $\tuple{q,q'}\in R$) and  $\mathcal{L}:Q \to 2^{\Ap}$ is a labeling function assigning a set of atomic propositions to any state $q\in Q$. 
\end{definition}
    
\begin{definition}[\abbrv{Markov Chain}] A Markov Chain (\abbrv{MC}) is a pair $\mathcal{H}$ = $(Q, \mathbf{P})$ where $Q$ is a (countable) set of states and $\mathbf{P}$: $Q \times Q$ $\to$ $[0, 1]$ is a transition probability function such that for all state $q \in Q$,  $\Sigma_{q' \in Q} \mathbf{P}(q, q') = 1$. If $Q$ is finite, we can consider $\mathbf{P}$ to be a transition matrix. 
\end{definition}
A \abbrv{KS} can be extended via \abbrv{MC} \cite{KLS2012} to define \emph{Probabilistic Kripke Structure (PKS)} as follows.


\begin{definition}[Probabilistic Kripke Structure] A \abbrv{PKS} over a set $\Ap$ of atomic propositions is a tuple $\mathcal{G}$ = $\tuple{Q, q_0, \mathbf{P}, \mathcal{L}}$ where $(Q, \mathbf{P})$ is a \abbrv{MC}, $q_0$ is the initial state and $\mathcal{L}:S \to 2^{\Ap}$ is a labeling function assigning a set of atomic propositions to any state $q\in Q$. 
\end{definition}

\paragraph{Path.} A path $\pi$ over $\mathcal{G}$ is a finite or infinite sequence of states $\pi$ = $q_0, q_1, q_2, \ldots$  starting in the initial state $q_0$ that are built by consecutive steps, i.e.,  $\mathbf{P}(q_i,q_{i+1}) > 0$ for all $i\in \mathbb{N}$. We write $\pi_i$ to denote the $i$-th element $q_i$  of $\pi$, $\pi_{\leq i}$ to denote the prefix $q_0,\ldots, q_i$ of $\pi$, and $\pi_{\geq i}$ to denote the suffix $q_i,q_{i+1}\ldots$ of $\pi$. The set of all finite paths starting from $q \in Q$ in the model $\mathcal{G}$ is denoted by $\abbrv{Paths}_{\mathcal{G}, q}^{+}$, and the set of all infinite paths starting from $q$ is denoted by $\abbrv{Paths}_{\mathcal{G}, q}^{*}$. A \emph{history} $h$ is any finite prefix of some path. We use $H$ to denote the set of histories. Write $last(h)$ for the last state of a history $h$.

\paragraph{Cylinder.} 

We need to measure the probability of certain sets of paths. Formally, to every $q$ $\in$ $Q$ we associate the \abbrv{probability space} $(\Omega, \mathcal{F}, \abbrv{Pr})$  where $\mathcal{F}$ is the $\sigma$-algebra generated by all basic cylinders 
sets of paths called cylinder sets, which gather all paths sharing a given finite prefix (i.e., $\abbrv{Prefix}(\pi)$). Given a finite path $\hat{\pi}$ = $q_0,q_1, \ldots, q_n$ of states, the cylinder set of $\hat{\pi}$, denoted $\abbrv{Cyl}(\hat{\pi}$) = $\{\pi \in \abbrv{Paths}_{\mathcal{G}, q_0}^{*}  \mid  \hat{\pi} \in \abbrv{Prefix}(\pi) \}$, is the set of infinite paths $\pi = q_0,q_1, \cdots, q_n, \cdots$, where $\hat{\pi}$ is a prefix of $\pi$. The set of infinite paths is supposed to be equipped with the $\sigma$-algebra generated by the cylinder sets of the finite paths and the probability measure given by $\abbrv{Pr}_{\mathcal{G}}^{q_0}$$(\abbrv{Cyl}(\hat{\pi}))$ = $\prod_{i=0}^{n-1} \mathbf{P}(q_i, q_{i+1})$. The extension of $\abbrv{Pr}_{\mathcal{G}}^{q}$ from cylinders to the $\sigma$-algebra they generate is unique, and we still denote it $\abbrv{Pr}_{\mathcal{G}}^{q}$. Note that not all sets of paths are measurable with respect to $\abbrv{Pr}_{\mathcal{G}}^{q}$, but the sets we will consider in this paper are simple enough to avoid such difficulties. 
For the mathematical details of the underlying $\sigma$-algebra and probability measure refer to \cite{BaierKatoen08}. 

\paragraph{Predecessors and Successors.} 
Let  $\mathcal{G}$ be a \abbrv{PKS} and $q$ $\in$ $Q$ be one of its states, $\abbrv{pre}(q)$ denotes the set of predecessors of $q$, i.e., $\abbrv{pre}(q)=\{q'\in Q  \mid  \mathbf{P}(q', q) > 0 \}$. Similarly, $\abbrv{post}(q)$  denotes the set of successors  of $q$, i.e., $\abbrv{post}(q)= \{q'\in Q \mid  \mathbf{P}(q, q') > 0\}$, and $\abbrv{E}(q)$ denotes its outgoing edges $\abbrv{E}(q) = \{e \in Q \times Q \mid e=(q, q') \text{ for some $q'\in Q$ and }  \mathbf{P}(q, q') > 0 \}$.

\section{Model and Logic}\label{sec:SyntaxandSemantics}

In this section, we define the syntax and semantics of our Probabilistic Obstruction Temporal Logic (\abbrv{POTL}). 
To do this, first, we introduce the Probabilistic Obstruction Temporal Structure (\abbrv{POTS}), the type of model that we use to verify \abbrv{POTL} properties.

\begin{definition}[Probabilistic Obstruction Temporal Structure] A \abbrv{POTS}  (model for short) is given by a tuple $\mathcal{M}$ = $(Q, q_0, \mathbf{P}, \mathcal{L}, \mathsf{C})$ where $\mathcal{G}$ = $(Q, q_0, \mathbf{P},\mathcal{L})$ is a \abbrv{PKS} and $\mathsf{C}: Q \times Q \to \mathbb{N}$ is a function assigning to any  pairs $(q,q')$  a natural number $n \in \mathbb{N}$. 
\end{definition}

\paragraph{Strategy and Outcomes.}
Let $\mathcal{M}$ be a model, $Q$ be states in $\mathcal{M}$, $\mathsf{C}$ is the function cost and $n$ be a natural number, a $n$-strategy is a function $\strat: H\to 2^{Q \times Q}$ that, given a history $h$, returns a subset $T \in Q \times Q$, such that: 
    (i) $T\subset \abbrv{E}(last(h))$, 
    (ii) $(\sum_{e\in T} \mathsf{C}(e)) \leq n$. 
 A memoryless n-strategy is a n-strategy $\strat$ such that for all histories $h$ and $h'$ if $last(h)=last(h')$ then $\strat(h)=\strat(h')$. A memoryless n-strategy can be seen as a function whose domain is the set $Q$ of states of a model $\mathcal{M}$.  As in \abbrv{ATL} logic, the notion of a path that is compatible with a strategy is central to the semantics of Probabilistic Obstruction Logic (\abbrv{POTL}) formulas. We define this notion by saying that a path $\pi$ is compatible with an n-strategy $\strat$ if for all $i\geq 1$ we have that $(\pi_i,\pi_{i+1})\notin \strat(\pi_{\leq i})$.  The set of outcomes of an $n$-strategy  $\strat$  and state $q$ is denoted as $\abbrv{Out}(q, \strat)$ and it returns the set of all paths that can result from a strategy $\strat$ and a state $q$.
As said in the introduction, our logic (\abbrv{POTL}) aims to capture strategies for a particular type of game played over a \abbrv{POTS}, in such games, one of the two players (the Demon) has the power to temporally deactivate some transitions of the model. We now introduce the syntax of our logic. 

\begin{definition}

Let $\Ap$ be an at most countable set of atomic formulas (or atoms). Formulas of Probabilistic  Obstruction Temporal Logic (\abbrv{POTL}, for short) are defined by the following grammar:

$$\varphi ::= \top \mid p \mid \neg \varphi \mid \varphi \land \varphi \mid \langle \sabotage_n^{\bowtie k} \rangle  \theta$$
$$\theta::= \prox \varphi \mid  \varphi \until^{\le m} \varphi \mid \varphi \until \varphi \mid \varphi \release^{\le m} \varphi \mid \varphi \release \varphi $$

where $p$ $\in$ $\Ap$ is an atomic formula, $k$ $\in$ $[0, 1]$ is a rational constant, $n$ (the grade) and $m$ are any natural number in $\mathbb{N}$, and $\bowtie$ $\in$ $\{\le, <, >, \ge\}$.
\end{definition}

In the above syntax, we distinguish between state formulas $\varphi$ and path formulas $\theta$. State formulas are evaluated over states and path formulas over paths. A model property is always expressed as a state formula, path formulas appear only as parameters of the probabilistic path operator $\langle \sabotage_n^{\bowtie k} \rangle  \theta$. 
The operators $X$ (next), $\until^{\le m}$ (bounded until), $\until$ (until), $\release^{\le m}$ (bounded release), and $\release$ (release), which are standard in temporal logic, are allowed as path formulas. The number $n$ is called \emph{the grade} of the strategic operator. The boolean connectives $\bot$, $\vee$ and $\to$ can be defined as usual, we define $\langle {\sabotage_n^{\bowtie k}} \rangle \mathsf{F} \varphi:= \langle {\sabotage_n^{\bowtie k}} \rangle (\top \until \varphi)$, $\langle {\sabotage_n^{\bowtie k}} \rangle \globaly \varphi:= \langle {\sabotage_n^{\bowtie k}} \rangle ( \bot \release \varphi) $ and $\langle {\sabotage_n^{\bowtie k}}  \rangle (\varphi\, \mathsf{W}\, \psi):= \langle {\sabotage_n^{\bowtie k}} \rangle (\psi \release (\varphi \vee \psi))$.  
The size $|\varphi|$ of a formula $\varphi$ is the number of its connectives. The intuitive meaning of a formula $\langle {\sabotage_n^{\bowtie k}} \rangle \varphi$ with $\varphi$ temporal formula is “there is a demonic strategy such that all paths of the graphs that are compatible with the strategy satisfy $\varphi$ with a probability in relation $\bowtie$ with constant $k$” where “demonic strategy” means “a strategy for disabling arcs”.  Formulas of \abbrv{POTL} can be interpreted over \abbrv{POTS}. We can now precisely define the semantics of \abbrv{POTL} formulas. 

%


\begin{definition}\label{def:sat}
    The satisfaction relation between a model $\mathcal{M}$, a state $q$ of $\mathcal{M}$, and a formula $\varphi$ is defined by induction on the structure of $\varphi$: 
\begin{itemize}
    \item $\mathcal{M}, q\models \top$ for all state $q$,
    \item  $\mathcal{M}, q\models p$ iff $p\in \mathcal{L}(q)$,
    \item  $\mathcal{M}, q\models \neg \varphi$ iff not $\mathcal{M}, 
    q\models \varphi$ (notation $\mathcal{M}, q\not \models \varphi$),
    \item $\mathcal{M}, q\models \varphi_1 \land \varphi_2$ iff $\mathcal{M}, q\models \varphi_1$ and $\mathcal{M}, q\models \varphi_2$, 
    \item $\mathcal{M}, q\models \langle \sabotage_n^{\bowtie k} \rangle  \theta $  iff there is a n-strategy $\strat$ such that $\abbrv{Pr}_{\mathcal{M}}^{q}$$(\{\pi \in  \abbrv{Out}(q,\strat) \ \mid \ \mathcal{M}, \pi \models \theta \})$ $\bowtie$ $k$.
     \end{itemize}

\noindent The satisfaction relation  $\mathcal{M}, \pi \models \varphi$ between a model $\mathcal{M}$, a path $\pi$ $\in$ $\abbrv{Paths}_{\mathcal{M}, q}^{*}$ of $\mathcal{M}$,  and path formula $\theta$ is defined as follows:

\begin{itemize}
    \item $\mathcal{M}, \pi \models \prox \varphi$ iff $\mathcal{M}, \pi_2 \models \varphi$,
     \item $\mathcal{M}, \pi \models \varphi_1 \until^{\le m} \varphi_2$ iff there is an $ 0 \le i \le m$ such that $\mathcal{M},\pi_i \models \varphi_2$ and $\mathcal{M},\pi_j \models \varphi_1$ for all $0 \leq j < i$,
    \item $\mathcal{M}, \pi \models \varphi_1 \until \varphi_2$ iff there is an $i\geq 0$ such that $\mathcal{M},\pi_i \models \varphi_2$ and $\mathcal{M},\pi_j \models \varphi_1$ for all $0\leq j < i$,
    \item $\mathcal{M},\pi \models \varphi_1 \release^{\le m} \varphi_2$ iff either $\mathcal{M}, \pi_i \models \varphi_2 $ for all $0 \le i \le m$ or there is an $0 \le i \le m$ such that $\mathcal{M},\pi_i \models \varphi_1$ and $\mathcal{M}, \pi_j \models \varphi_2$ for all $0 \leq j \leq i$.
    \item $\mathcal{M},\pi \models \varphi_1 \release \varphi_2$ iff either $\mathcal{M}, \pi_i \models \varphi_2 $ for all $i\geq 0$ or there is an $i \geq 0$ such that $\mathcal{M},\pi_i \models \varphi_1$ and $\mathcal{M}, \pi_j \models \varphi_2$ for all $0 \leq j \leq i$.
\end{itemize}

\end{definition}
Let $\varphi$ be a formula and $\mathcal{M}$ be a model,  then \abbrv{Sat}$({\varphi}, \mathcal{M})$ denotes the set of states of $\mathcal{M}$  verifying $\varphi$, i.e., \abbrv{Sat}$(\varphi, \mathcal{M})=\set{q \in Q \ \, | \, \ \mathcal{M}, q \models \varphi}$. 
%
%
Two formulas $\varphi$ and $\psi$ are equivalent (denoted by $\varphi \equiv \psi$) if for all models $\mathcal{M}$, \abbrv{Sat}$(\varphi,\mathcal{M})$ = \abbrv{Sat}$(\psi,\mathcal{M})$
The semantics of the obstruction probabilistic operator $\langle  {\sabotage_n}^{\bowtie k} \rangle$ refers to the probability for the sets of paths for which a path formula holds. To ensure that this is well-defined, we need to establish that the events specified by \abbrv{POTL} path formulas are measurable. Since the set $\{ \pi \in \abbrv{Out}(q,\strat) \mid \mathcal{M}, \pi \models \varphi \}$ for \abbrv{POTL} path formula $\varphi$ can be considered as a countable union of cylinder sets, its measurability is ensured. This follows from the following lemma.

\begin{lemma}
    For each $\abbrv{POTL}$ path formula $\varphi$ and state $q$ of a model $\mathcal{M}$, the set $\{ \pi \in  \abbrv{Out}(q,\strat) | \ \mathcal{M}, \pi \models \varphi\}$ is measurable.
\end{lemma}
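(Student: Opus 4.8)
The plan is to prove the lemma by a short case analysis on the outermost temporal operator of the path formula $\varphi$ — path formulas of \abbrv{POTL} do not nest, so no induction on $\varphi$ is needed — reducing every case to two facts. First, by definition the $\sigma$-algebra $\mathcal{F}$ associated with $q$ is generated by the cylinder sets $\abbrv{Cyl}(\hat\pi)$ of finite paths starting at $q$, so it is closed under countable unions, countable intersections and complements. Second, each state subformula $\psi$ occurring in $\varphi$ is strictly simpler than the formula under definition, so by the simultaneous induction of Definition~\ref{def:sat} the set $\abbrv{Sat}(\psi,\mathcal{M})\subseteq Q$ is already well defined; and since $Q$ is (at most) countable, there are only countably many finite prefixes of a given length.

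First I would verify that $\abbrv{Out}(q,\strat)$ itself belongs to $\mathcal{F}$. A path $\pi\in\abbrv{Paths}^{*}_{\mathcal{M},q}$ lies in $\abbrv{Out}(q,\strat)$ exactly when $(\pi_i,\pi_{i+1})\notin\strat(\pi_{\le i})$ for every $i\ge 1$. Fixing $i$ and setting $A_i=\set{\pi \mid (\pi_i,\pi_{i+1})\notin\strat(\pi_{\le i})}$, membership in $A_i$ depends only on the finite prefix $\pi_{\le i+1}$; hence $A_i$ is the union, over the countably many prefixes $\hat\pi=q_0\cdots q_{i+1}$ from $q$ with $(q_i,q_{i+1})\notin\strat(q_0\cdots q_i)$, of the corresponding cylinder sets $\abbrv{Cyl}(\hat\pi)$, so $A_i\in\mathcal{F}$. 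Therefore $\abbrv{Out}(q,\strat)=\bigcap_{i\ge 1}A_i\in\mathcal{F}$. Consequently the set in the statement equals $\abbrv{Out}(q,\strat)\cap\sem{\varphi}$, where $\sem{\varphi}:=\set{\pi\in\abbrv{Paths}^{*}_{\mathcal{M},q}\mid\mathcal{M},\pi\models\varphi}$, and it remains to show $\sem{\varphi}\in\mathcal{F}$.

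Then I would treat the five shapes of $\varphi$ from Definition~\ref{def:sat}. For $\next\psi$, for the bounded until $\psi_1\until^{\le m}\psi_2$, and for the bounded release $\psi_1\release^{\le m}\psi_2$, whether $\mathcal{M},\pi\models\varphi$ holds depends only on a prefix of $\pi$ of bounded length; classifying the countably many such prefixes by which of their states lie in $\abbrv{Sat}(\psi_1,\mathcal{M})$ and $\abbrv{Sat}(\psi_2,\mathcal{M})$ exactly as Definition~\ref{def:sat} prescribes, $\sem{\varphi}$ becomes a countable union of cylinder sets, hence is in $\mathcal{F}$. For the unbounded until $\psi_1\until\psi_2$ one writes $\sem{\varphi}=\bigcup_{i\ge 0}C_i$ with $C_i=\set{\pi\mid \pi_i\in\abbrv{Sat}(\psi_2,\mathcal{M})\text{ and }\pi_j\in\abbrv{Sat}(\psi_1,\mathcal{M})\text{ for }0\le j<i}$; each $C_i$ is a countable union of cylinder sets, so $\sem{\varphi}$ is a countable union of members of $\mathcal{F}$. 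For the unbounded release $\psi_1\release\psi_2$ one writes $\sem{\varphi}=D\cup\bigcup_{i\ge 0}E_i$, where $D=\bigcap_{i\ge 0}\set{\pi\mid\pi_i\in\abbrv{Sat}(\psi_2,\mathcal{M})}$ is a countable intersection of countable unions of cylinder sets and $E_i=\set{\pi\mid\pi_i\in\abbrv{Sat}(\psi_1,\mathcal{M})\text{ and }\pi_j\in\abbrv{Sat}(\psi_2,\mathcal{M})\text{ for }0\le j\le i}$ is a countable union of cylinder sets; hence $\sem{\varphi}\in\mathcal{F}$. Intersecting with $\abbrv{Out}(q,\strat)$ closes every case.

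The one genuinely delicate point is the handling of $\abbrv{Out}(q,\strat)$ for a \emph{history-dependent} strategy: although $\strat$ is queried on arbitrarily long histories, each individual compatibility constraint is settled by a finite prefix, and this is precisely what keeps each $A_i$, and thus the entire outcome set, inside the cylinder-generated $\sigma$-algebra. Everything else is bookkeeping — the index range in ``for all $j<i$'' is finite, the quantifier ``there exists $i\ge 0$'' yields a countable union and ``for all $i\ge 0$'' a countable intersection, and $\mathcal{F}$ is closed under all of these.
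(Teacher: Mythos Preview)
Your proposal is correct and follows essentially the same approach the paper points to: the paper's own proof is a one-line deferral to the standard cylinder-set argument for \abbrv{PCTL} in Baier--Katoen, and your case analysis on the temporal operators is exactly that argument spelled out. The one thing you add beyond the paper is the explicit verification that $\abbrv{Out}(q,\strat)$ itself lies in $\mathcal{F}$ for history-dependent strategies, which the paper leaves implicit; this is a genuine point and your treatment of it via the prefix-determined sets $A_i$ is sound.
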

\begin{proof}
   The approach is similar to the one proposed in \cite{BaierKatoen08} for \abbrv{PCTL}. 
\end{proof}






\section{Model Checking}\label{sec:mc}

Here, we present our model checking algorithm for \abbrv{POTL}. Furthermore, we show that the model checking problem for \abbrv{POTL} is decidable in \abbrv{PTIME}.  \abbrv{POTL} model checking algorithm is based on the computation of the set \abbrv{Sat}$(\varphi, \mathcal{M})$ of all states satisfying a \abbrv{POTL} formula $\varphi$, followed by checking whether the initial state is included in this set.
The most interesting part of our logic is the treatment of the formula  $\psi$ = $\naww{\sabotage_n^{\bowtie k}}  \theta$.
 In order to determine whether $q$ $\in$ $\abbrv{Sat}(\psi, \mathcal{M})$,  we will use $\abbrv{Pr}_{\mathcal{M}, q}^{\strat}$ $(\theta)$ to denote the probability that all paths from $q$ that are in accordance with the n-strategy $\strat$ satisfies path formula $\theta$, that is $\abbrv{Pr}_{\mathcal{M}, q}^{\strat}(\theta) = \abbrv{Pr}_{\mathcal{M}}^{q}(\{\pi \in  \abbrv{Out}(q,\strat)  \mid \mathcal{M}, \pi \models \theta \})$. Then  
 $$\abbrv{Sat}(\psi, \mathcal{M}) = \{q \in Q \ \mid \  \abbrv{Pr}_{\mathcal{M}, q}^{\strat}(\theta) \bowtie k \}$$

We omit the superscript $\mathcal{M}$ in $\abbrv{Pr}_{q}^{\strat}$ $(\theta)$  and $\abbrv{Sat}(\psi)$ when the model is clear from the context. 
Now, we introduce our predecessor operator. Let $A$ = $\abbrv{Sat}(\varphi)$ be a set of states, then the predecessor computation is done by the obstruction predecessor operator  $\blacktriangledown(n, A)$ where $n$ is an integer and the operator computes the set of all predecessor states. 

\begin{definition}
\label{def:definition13}
  Given a set of  states  $A$ $\subseteq$ $Q$, we define $\abbrv{Pre}(A)$ = $ \bigcup_{q \in  A}$ $ \abbrv{pre}(q)$.   
\end{definition}

Now, let us define the obstruction predecessor operator. 


\begin{definition}[Obstruction Predecessor]
  Let $\mathcal{M}=(Q, q_0, \mathbf{P}, \mathcal{L}, \mathsf{C})$ be a model. 
Given a state  $q$,  a natural number $n$, and  a set of states $A$ $\subseteq$ $Q$, we write:

$$\blacktriangleright(q,n,A) \ = \ \sum_{q'\in A} \mathsf{C}(q,q')\leq n $$ 
$$\blacktriangledown(n,A) =  \set{q \in \abbrv{Pre}(A) \, \mid \,  \blacktriangleright(q,n,\overline{A})}$$

\end{definition}

The general structure of the  Algorithm~\ref{alg:labeling} shown here is similar to \abbrv{OL} model checking algorithm \cite{CLM23}. However, it is now necessary to compute relevant probabilities. For model checking operator  $\naww{\sabotage_n^{\bowtie k}} \theta$ applied to a model $\mathcal{M}$ the probability of a path leaving each state $q$ satisfying the path formula $\theta$ must be computed. This may require a calculation involving the operators: next $(\prox \varphi)$, bounded until ($\varphi_1 \until^{\le m} \varphi_2$), until ($\varphi_1 \until \varphi_2$), bounded release ($\varphi_1 \release^{\le m} \varphi_2$), or release  ($\varphi_1 \release \varphi_2$).  We calculate, for an n-strategy and all states $q$ $\in$ $Q$, the probabilities: $\abbrv{Pr}_{q}^{\strat}(\prox \varphi)$, $\abbrv{Pr}_{q}^{\strat}(\varphi_1 \until^{\le m} \varphi_2)$, $\abbrv{Pr}_{q}^{\strat} (\varphi_1 \until \varphi_2)$, $\abbrv{Pr}_{q}^{\strat}(\varphi_1 \release^{\le m} \varphi_2)$, and $\abbrv{Pr}_{q}^{\strat} (\varphi_1 \release \varphi_2)$ respectively.


  \begin{algorithm}[ht]
		\caption{\abbrv{POTL} model checking  \\ \abbrv{Input:} A  model $\mathcal{M}$ and  $\varphi$  is a \abbrv{POTL} formula\\
         \abbrv{Output:} \abbrv{Sat}$(\varphi)$ $\gets$ $\{q \in Q \ | \ \mathcal{M}, q \models  \varphi \}$ }
		   \label{alg:labeling}
		    \begin{algorithmic}[1]
             \ForAll{$\psi$  $\in Sub(\varphi) $}
			\Switch{$(\psi)$}
			\Case{$\psi=\top$}
			\State{$ \abbrv{Sat}(\psi) \gets Q$}
			\EndCase
			\Case{$\psi=p$}
			\State{$  \abbrv{Sat}(\psi) \gets  \set{q\in Q \, | \,  p\in \mathcal{L}(q)}$}
			\EndCase
			\Case{$\psi=\neg \psi_1$}
			\State{$\abbrv{Sat}(\psi) \gets Q \setminus \abbrv{Sat}(\psi) $}
			\EndCase
			\Case{$\psi=\psi_1 \land \psi_2$}
			\State{$\abbrv{Sat}(\psi) \gets \abbrv{Sat}(\psi_1)\cap \abbrv{Sat}(\psi_2)$}
			\EndCase
			\Case{$\psi=\naww{\sabotage_n^{\bowtie k}} \theta$}
			\State{$\abbrv{Sat}(\psi) \gets \{q \in Q \ \mid \  \abbrv{Pr}_{\mathcal{M}, q}^{\strat}(\theta) \bowtie k \}$}
			\EndCase
			\EndSwitch
			\EndFor
		\end{algorithmic}
	\end{algorithm}

    Let us first consider the next operator. For $\psi$ = $\prox \varphi$, the following equality holds: 
 $\abbrv{Pr}_{q}^{\strat}(\prox \varphi)   = \min_{q' \in \blacktriangledown(n, \abbrv{Sat}(\varphi))} \mathbf{P}(q', q)$, where $\mathbf{P}$ is the transition probability function of $\mathcal{M}$. Thus, we have the resulting vector $(\abbrv{Pr}_{q}^{\strat}(\prox \varphi))_{q \in \abbrv{Sat}(\varphi)}$. 
 
    Let us consider the bounded until operator. For $\psi$ = $\varphi_1 \until^{\le m} \varphi_2$, the following equality holds: $\abbrv{Pr}_{q}^{\strat} (\varphi_1 \until^{\le m} \varphi_2)$. The set of states is partitioned into the three disjoint sets to perform the computation associated with this operator: $Q^{no}$ = $Q \setminus (\blacktriangledown(n, \abbrv{Sat}(\varphi_1))$ $\cup$ $\blacktriangledown(n, \abbrv{Sat}(\varphi_2))$, $Q^{yes}$ = $\blacktriangledown(n, \abbrv{Sat}(\varphi_2))$, and $Q^{?}$ = $Q \setminus (Q^{no} \cup Q^{yes})$. The sets $Q^{yes}$ and $Q^{no}$ contain the states for which $\abbrv{Pr}_{q}^{\strat} (\varphi_1 \until^{\le m} \varphi_2)$ is equal to 1 and 0 respectively, and $Q^{?}$ contains all other states. For the set of states $Q^{?}$ we have:

 \begin{equation*}
     \mathcal{X}_{q}^{m}= 
     \begin{cases}
         0  &  \text{if $m=0$} \\
        \min_{q' \in \blacktriangledown(n, \abbrv{Sat}(\varphi_1))} \mathbf{P}(q', q) \cdot 
      \mathcal{X}_{q}^{m-1} &
         \text{if $m \ge 1$}  
     \end{cases}
 \end{equation*}
where $\mathcal{X}_{q}^{m}$ = $ \abbrv{Pr}_{q}^{\strat}(\varphi_1\until^{\le m}\varphi_2$).  This is essentially applying the next operator $m$ times, while checking the satisfaction of $\varphi_1$ and $\varphi_2$. 
Let  $(\mathcal{X}_{q}^{m})$ = $(\abbrv{Pr}_{q}^{\strat}(\varphi_1 \until^{\le m} \varphi_2)$ be a state indexed vector and by defining the matrix $\mathbf{P}'$ as follows:

 \begin{equation*}
  \mathbf{P}'(q', q) = 
     \begin{cases}
        \mathbf{P}'(q', q)   &  \text{if $q \in Q^{?}$} \\
        $1$  & \text{if $q \in Q^{yes}$ and $q' =q$}  \\
         $0$ &   \text{if $q \in Q^{no}$}
        
     \end{cases}
 \end{equation*}
 The probabilities can be computed as follows. If $m = 0$ and $q$ $\in$ $Q^{yes}$,  then $(\mathcal{X}_{q}^{0})$ = 1, and if $q \in Q^{no}$, $(\mathcal{X}_{q}^{0})$ = 0. In the case where $m \ge  1$, the vector $(\mathcal{X}_{q}^{m})$ can be computed by $m$ matrix-vector multiplication $(\mathcal{X}_{q}^{m})$ = $\mathbf{P}'$ $\cdot$ $(\mathcal{X}_{q}^{m-1})$.
    
   Now consider the (unbounded) until operator. For $\psi$ = $\varphi_1 \until \varphi_2$ the following equality holds: $\abbrv{Pr}_{q}^{\strat} (\varphi_1 \until \varphi_2)$ \textcolor{blue}. As with the bounded until operator, all states are partitioned into the three disjoint sets $Q^{yes}$, $Q^{no}$, and $Q^{?}$. The sets are defined as above. However, the sets $Q^{yes}$, $Q^{no}$ are extended to contain all states for which $\abbrv{Pr}_{q}^{\strat} (\varphi_1 \until \varphi_2)$ is 1 or 0.  They can be determined with the fixed-point algorithms described in Algorithm~\ref{alg:labeling1} (\abbrv{Algo2}) and Algorithm~\ref{alg:labeling2} (\abbrv{Algo3}), respectively. \abbrv{Algo2} (i.e., the set $Q^{no}$) is computed by first computing the set of states reachable with non-zero probability that satisfy $\varphi_2$ whose predecessors do not satisfy $\varphi_1$. Subtracting these states from the set $Q$ gives the set of states with 0 probability. \abbrv{Algo3} (i.e. the set $Q^{yes}$) computes similarly the set of states that are reachable with probability less than 1 and that satisfy $\varphi_2$ whose predecessors do not satisfy $\varphi_1$. The set of states satisfying the operator with probability 1 is determined by subtracting these states from $Q$. The reason for precomputing $Q^{yes}$, $Q^{no}$ is that it ensures a unique solution to the linear system of equations and reduces the set of states in $Q^{?}$ for which probabilities must be computed numerically. In addition, the model checking of qualitative properties for which the probability bound is 1 or 0 does not require any further computation. The final set $Q^{?}$ can be computed by solving the linear equation.

 \begin{equation*}
  \mathcal{X}_{q} =
     \begin{cases}
        $0$ &  \text{if $q$ $\in$ $Q^{no}$} \\
        $1$ &  \text{if $q \in Q^{yes}$}  \\
       \min_{q' \in Q} \mathbf{P}(q', q) \ \cdot \  \mathcal{X}_{q}'  &   \text{if $q$ $\in$ $Q^{?}$}
        
     \end{cases}
 \end{equation*}
 
 where  $\mathcal{X}_{q}$ = $\abbrv{Pr}_{q}^{\strat}(\varphi_1\until \varphi_2)$. To reconstruct the problem in the form $\mathbf{A}\cdot x = b$. Let  $(\mathcal{X}_{q})$ be the state indexed vector where $(\mathcal{X}_{q} = 1$ if $q$ $\in$ $Q^{yes}$ and $(\mathcal{X}_{q}) = 0 $ if $q$ $\in$ $Q^{no}$, and $\mathbf{A}$  = $\mathbf{I}$ – $\mathbf{P}'$ where $\mathbf{I}$ is the identity matrix and matrix $\mathbf{P}'$ is as defined below: 
\begin{equation*}
  \mathbf{P}'(q', q) = 
     \begin{cases}
        \mathbf{P}'(q', q)   &  \text{if $q \in Q^{?}$} \\
        $1$  & \text{if $q \in Q^{yes}$}  \\
         $0$ &   \text{if $q \in Q^{no}$}
        
     \end{cases}
 \end{equation*}
The Power method \cite{EVR62}, can then be used to solve the linear system $\mathbf{A}\cdot x = b$.
 

 \begin{algorithm}
		\caption{Backward search for computing  $Q^{no}$ (\abbrv{Algo2})  \\ \abbrv{Input:} A  formula $\abbrv{Pr}_{q}^{\strat}(\abbrv{Sat}(\varphi_1) \until \abbrv{Sat}(\varphi_2))$ and $Q$.  \\
         \abbrv{Output:} A set \abbrv{R}  of states which have a zero probability. }
		   \label{alg:labeling1}
		    \begin{algorithmic}[1]
			\State{$ Y \gets \abbrv{Sat}(\varphi_2)$}
            \State{$ X \gets \emptyset$}
		    \While{$(Y \neq X)$}
   			\State{$X \gets Y$}
            \State{$Y \gets Y \cup (\{q \in \abbrv{Sat}(\varphi_1)  |$ $  \exists q' \in Y, \textbf{P}(q', q)>0\}$  $\cap$ $\blacktriangledown(n, \abbrv{Sat}(X)))$} 
		\EndWhile
             \State{$R \gets Q \setminus Y$}
		    \State{$ \mathbf{return} \  R$} 
		\end{algorithmic}
	\end{algorithm}
\begin{algorithm}
		\caption{Backward search for computing   $Q^{yes}$ (\abbrv{Algo3})   \\ \abbrv{Input:} A   formula $\abbrv{Pr}_{q}^{\strat}(\abbrv{Sat}(\varphi_1) \until \abbrv{Sat}(\varphi_2))$, $Q$ and $Q^{no}$  \\
         \abbrv{Output:}  A set \abbrv{R}  of states satisfying the operator with  probability 1. }
		   \label{alg:labeling2}
		    \begin{algorithmic}[1]
			\State{$ Y \gets Q^{no}$}
            \State{$ X \gets \emptyset$}
		    \While{($Y \neq X)$}
   			\State{$X \gets Y$}
            \State{$Y \gets Y \cup (\{q \in (\abbrv{Sat}(\varphi_1) \setminus \abbrv{Sat}(\varphi_2)) |$ $ \exists q' \in Y,$ $ \textbf{P}(q',q)>0\}$ $\cap$ $\blacktriangledown(n, \abbrv{Sat}(X)))$} 
		\EndWhile
             \State{$R \gets Q \setminus Y$}
		    \State{$ \mathbf{return} \  R$} 
		\end{algorithmic}
	\end{algorithm}

  Now consider the bounded release operator. For $\psi$ = $\varphi_1 \release^{m} \varphi_2$ the following equality holds: $\abbrv{Pr}_{q}^{\strat} (\varphi_1 \release \varphi_2)$. The argument is dual to bounded until operator. 
  
  The last case is the (unbounded) release operator. For $\psi$ = $\varphi_1 \release \varphi_2$ the following equality holds:  $\abbrv{Pr}_{q}^{\strat} (\varphi_1 \release \varphi_2)$. The argument is dual to unbounded until operator. However, Algorithm~\ref{alg:labeling1} and \ref{alg:labeling2} should be modified at line 5, where the intersection operator should be changed to the union operator.  

Let us now prove the termination and correctness of the Algorithm~\ref{alg:labeling}. 


 
\begin{thm}[\textbf{Correctness}]
Let $\mathcal{M}$ be a \abbrv{POTS} model and  $\varphi$ be a \abbrv{POTL} formula. Then, $(i)$ $\abbrv{Sat}(\varphi)$ terminates and $(ii)$ $q$ $\in$ $\abbrv{Sat}(\varphi)$ iff  $\mathcal{M}, q$ $\models$ $\varphi$.
\end{thm}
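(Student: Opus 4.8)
The plan is to prove the two parts separately, proceeding by structural induction on $\varphi$. The statement couples termination with correctness, but these can be handled in a single induction because the recursive structure of Algorithm~\ref{alg:labeling} follows $\mathit{Sub}(\varphi)$: every state subformula strictly smaller than $\varphi$ has already had its $\abbrv{Sat}$ set computed before $\varphi$ is treated.

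For part $(i)$, termination, I would first observe that $\mathit{Sub}(\varphi)$ is finite and the outer loop of Algorithm~\ref{alg:labeling} iterates over it exactly once. For each case of the switch, the work is manifestly finite except for the fixed-point subroutines \abbrv{Algo2} and \abbrv{Algo3} and the numerical solution of the linear system for the unbounded until/release operators. For \abbrv{Algo2} and \abbrv{Algo3}, I would argue that the \textbf{while} loop computes a monotonically increasing chain $X \subseteq Y \subseteq Y' \subseteq \dots$ of subsets of the finite state set $Q$, so it stabilizes after at most $|Q|$ iterations; the auxiliary calls to $\blacktriangledown(n,\cdot)$ and $\abbrv{Pre}(\cdot)$ each terminate since they range over the finite sets $Q \times Q$ and $Q$. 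For the bounded operators, the recursion on $m$ bottoms out at $m = 0$, giving termination in $m$ steps. For the unbounded quantitative case, once $Q^{yes}$ and $Q^{no}$ are precomputed the matrix $\mathbf{I} - \mathbf{P}'$ restricted to $Q^{?}$ is nonsingular (sub-stochasticity of $\mathbf{P}'$ on $Q^{?}$ combined with the fact that every $Q^{?}$-state can reach $Q^{yes} \cup Q^{no}$ with positive probability guarantees spectral radius of $\mathbf{P}'$ strictly below $1$), so the Power method converges; in a complexity-theoretic reading one may instead invoke exact Gaussian elimination, which terminates in polynomially many arithmetic steps.

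For part $(ii)$, correctness, the Boolean and atomic cases ($\top$, $p$, $\neg$, $\land$) are immediate from Definition~\ref{def:sat} and the induction hypothesis on the (strictly smaller) immediate subformulas. The heart of the proof is the strategic case $\psi = \naww{\sabotage_n^{\bowtie k}}\theta$. Here I would split according to the shape of $\theta$. The key conceptual lemma, to be stated and proved separately, is that the pointwise-optimal demonic strategy suffices: for any path formula $\theta$ built from already-labelled state formulas, the quantity $\sup_{\strat}\abbrv{Pr}_{\mathcal{M},q}^{\strat}(\theta)$ (or $\inf$, depending on $\bowtie$) is attained by a memoryless $n$-strategy, and moreover it equals the value computed by the recursions given in the text. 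Concretely, for $\prox\varphi$ one shows $\abbrv{Pr}_{q}^{\strat}(\prox\varphi)$ is optimized by a strategy that, at $q$, disables exactly the set of edges $T \subseteq \abbrv{E}(q)$ with $\sum_{e\in T}\mathsf{C}(e)\le n$ that maximizes (resp.\ minimizes) the residual probability mass into $\abbrv{Sat}(\varphi)$; the formula $\min_{q'\in\blacktriangledown(n,\abbrv{Sat}(\varphi))}\mathbf{P}(q',q)$ — together with the characterization of $\blacktriangledown$ in the Obstruction Predecessor definition — records exactly this. For bounded until I would argue by induction on $m$ that $\mathcal{X}_q^m$ satisfies the recurrence and that unrolling the demonic strategy stepwise is without loss of optimality (a one-step exchange argument: an optimal strategy can be taken to act memorylessly because the future value depends only on the current state). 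For unbounded until, I would verify that $Q^{yes}$ and $Q^{no}$ as computed by \abbrv{Algo2}/\abbrv{Algo3} are precisely $\{q : \sup_\strat\abbrv{Pr}_q^\strat(\varphi_1\until\varphi_2) = 1\}$ and $\{q : \sup_\strat = 0\}$ — this is the standard qualitative-reachability argument of \cite{BaierKatoen08} adapted to account for the demon's edge-disabling power via $\blacktriangledown$ — and then that on $Q^{?}$ the value is the unique fixed point of the displayed linear system, uniqueness following from the nonsingularity established in part $(i)$. The release cases follow by the stated duality $\varphi_1\release\varphi_2 \equiv \neg(\neg\varphi_1\until\neg\varphi_2)$, with the union/intersection swap in the fixed-point subroutines reflecting that the demon now cooperates with, rather than obstructs, reaching the ``bad'' set.

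The main obstacle I anticipate is the optimality-of-memoryless-strategies lemma in the strategic case, specifically justifying that the $\min$ over single-step transition probabilities in the recurrences genuinely captures $\sup_\strat$ or $\inf_\strat$ over \emph{all} (history-dependent) $n$-strategies, and that this pointwise greedy choice is globally optimal. For until-type goals this is the familiar interplay between memoryless determinacy of reachability games and the monotonicity of the Bellman-style operator; the subtlety here is that the demon's action at a state is not a single move but a \emph{budget-constrained set} of edges to cut, so the per-state optimization is itself a (knapsack-flavored) combinatorial choice that must be shown to be exactly what $\blacktriangledown(n,\cdot)$ encodes. I would isolate this as a preliminary lemma, prove it once for the one-step case, and then lift it to the temporal operators by the fixed-point / induction-on-$m$ arguments sketched above; the remaining bookkeeping (measurability is already granted by Lemma~1, and the dual cases mirror the primal ones) is routine.
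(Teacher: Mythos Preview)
Your proposal is correct and follows essentially the same structural-induction skeleton as the paper's own proof sketch: termination via finiteness of $Sub(\varphi)$ and monotone fixed-point chains over the finite state set, then correctness by case analysis on the outermost operator with the induction hypothesis supplying the labels of proper subformulas. You are more explicit than the paper about isolating the memoryless-optimality lemma as the crux (the paper simply asserts ``we can suppose that $\strat$ is memoryless'' without argument) and about the nonsingularity of $\mathbf{I}-\mathbf{P}'$ on $Q^{?}$, and you dispatch release via the until-duality whereas the paper argues it by a direct dual fixed-point computation, but these are refinements and minor variations within the same overall approach rather than a different route.
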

 \begin{proof}(Sketch)
 Intuitively, termination is straightforward because recursive calls within $\abbrv{Sat}(\varphi)$ are always applied to strictly sub-formulas of $\varphi$.  Let us prove $(i)$ and $(ii)$ by induction over the structure of $\varphi$ that, for every $\psi$ $\in$ $\abbrv{Sub}(\varphi)$ and $q$ $\in$ $Q$  holds iff, $\mathcal{M}, q$ $\models$ $\psi$.

\noindent (\textbf{Soundness.}) For every $\psi$ $\in$ \abbrv{Sub}($\varphi$) and $q$ $\in$ $Q$, implies $\mathcal{M}, q$ $\models$ $\psi$. We prove this by induction over the structure of $\psi$ as follows. For the \textbf{base case}: If $\psi=\top$, then $\abbrv{Sat}(\top)$ = $Q$. That means $(i)$ holds immediately and $(ii)$ follows directly from the truth definition. If $\psi=p$ ($p$ $\in$ $\Ap$), then  $\abbrv{Sat}(p)$ = $\mathcal{L}(p)$ by definition. That means $(i)$ holds immediately and for $(ii)$ $q$ $\in$ $\abbrv{Sat}(\psi)$ iff $q$ $\in$ $\mathcal{L}(p)$ then $\mathcal{M}, q$ $\models$ $p$ by the truth definition. For the \textbf{induction case}: the cases of boolean combinations, if $\psi= \neg \psi$, then $\abbrv{Sat}(\psi)$ = $Q$ $\setminus$ $\abbrv{Sat}(\psi_1)$ then induction hypothesis, $\abbrv{Sat}(\psi_1)$ terminates, therefore, $(i)$ holds. For $(ii)$ $q$ $\in$ \abbrv{Sat}$(\psi)$ iff $q$ $\in$ $Q$ $\setminus$ \abbrv{Sat}$(\psi)$ then, $q$ $\notin$ \abbrv{Sat}$(\psi_1)$ which means that $\mathcal{M}, q$ $\not\models$ $\psi_1$ by the induction hypothesis then $\mathcal{M}, q$ $\models$ $\neg \psi_1$ by truth definition. If $\psi=\psi_1 \wedge \psi_2$, then \abbrv{Sat}$(\psi)$ = \abbrv{Sat}$(\psi_1)$ $\cap$ \abbrv{Sat}$(\psi_2)$. By the
induction hypothesis,  \abbrv{Sat}$(\psi_1)$ and \abbrv{Sat}$(\psi_2)$ terminate, therefore, $(i)$ holds. For $(ii)$  $q$ $\in$ \abbrv{Sat}$(\psi)$ iff $q$ $\in$ \abbrv{Sat}$(\psi_1)$ $\cap$ \abbrv{Sat}$(\psi_2)$  then, $q$ $\in$ \abbrv{Sat}$(\psi_1)$ and $q$ $\in$ \abbrv{Sat}$(\psi_1)$ which means that $\mathcal{M}, q$ $\models$ $\psi_1$ and $\mathcal{M}, q$ $\models$ $\psi_2$ by the induction hypothesis then $\mathcal{M}, q$ $\models$ $\psi_1$ $\wedge$ $\mathcal{M}, q$ $\models$ $\psi_2$ by truth definition. The induction step for the remaining obstruction operators is as follows: If $\psi$ = $\naww{\sabotage_n^{\bowtie k}} \theta$ then \abbrv{Sat}$(\psi)$ = $\{q \in Q \ \mid \  \abbrv{Pr}_{\mathcal{M}, q}^{\strat}(\theta) \bowtie k \}$ by the \abbrv{Sat} definition. To prove that $(i)$ holds, we must show $\abbrv{Pr}_{\mathcal{M}, q}^{\strat}(\theta)$ terminates. If  $\psi$ = $\prox  \varphi_1$, the computation of $\abbrv{Pr}_{\mathcal{M}, q}^{\strat}(\theta)$ terminates due to the fact that \abbrv{Sat}$(\varphi_1)$ terminates by the induction hypothesis, and the functions $\blacktriangleright(q,n,A)$ and $\blacktriangledown(n,A)$  are finite. If $\psi$ = $\varphi_1 \until^{\le m}  \varphi_2$, the computation of $\abbrv{Pr}_{\mathcal{M}, q}^{\strat}(\theta)$
terminates due to the fact that \abbrv{Sat}$(\varphi_1)$ and \abbrv{Sat}$(\varphi_2)$ terminate by the induction hypothesis. $\blacktriangleright(q,n,A)$, $\blacktriangledown(n,A)$, \abbrv{Sat}$(\varphi_1)$ and \abbrv{Sat}$(\varphi_2)$ are all finite and the solution of the corresponding linear equation systems also terminates. If $\psi$ = $\varphi_1 \until \varphi_2$, the computation of $\abbrv{Pr}_{\mathcal{M}, q}^{\strat}(\theta)$ terminates due to the fact that \abbrv{Sat}$(\varphi_1)$ and \abbrv{Sat}$(\varphi_2)$ terminate by the induction hypothesis. 
If $\psi$ = $\varphi_1 \release^{\le m}  \varphi_2$, the computation of $\abbrv{Pr}_{\mathcal{M}, q}^{\strat}(\theta)$
terminates due to the fact that \abbrv{Sat}$(\varphi_1)$ and \abbrv{Sat}$(\varphi_2)$ terminate by the induction hypothesis. The argument is symmetric to bounded until operator.
Let $R$ be the set of symbolic states of $Q$ that is returned by algorithm~\ref{alg:labeling2} at line $6$. We need to show that $R=\abbrv{Sat}(\psi_2)$ provided that $X=\abbrv{Sat}({\psi_1})$. We first show that $\abbrv{Sat}({\psi}) \subseteq Y$. Suppose that $q\in \abbrv{Sat}(\psi)$. By the definition of satisfaction, this means that there is a strategy $\strat$ such that given any $\rho=q_1,q_2,\ldots $ in $Out(q,\strat)$ and note that since the cardinality of $\mathcal{M} $ is finite, and we can suppose that $\strat$ is memoryless, we can focus on the finite prefix $q_1,\ldots q_m$ of $\rho$ in which all the $q_i$ are distinct. Let $A_i$ (for $i< |{\mathcal{M}}|)$ be the value of the variable $A$ before the first $i$-th iteration of the algorithm. We show that if $C\subseteq A_i$ then $C\subseteq A_{i+1}$. Firstly, note that $A_i\subseteq \abbrv{Sat}(\psi_1)$ for all $i$.  By definition,  we have that $A_{i+1}=\blacktriangledown(n,A_i)\cap \abbrv{Sat}(\psi_1)$, i.e., $A_{i+1}$ is computed by taking all the element of $\abbrv{Sat}(\psi)$ that have at most $n$ successors that are not in $A_i$. 
If $\psi$ = $\varphi_1 \release \varphi_2$ then the proof is similar to the above case. 

\noindent ($\textbf{Completeness}$) For every $\psi$ $\in$ \abbrv{Sub}($\varphi$) and $q$ $\in$ $Q$, we prove that $\mathcal{M}, q$ $\not\models$ $\psi$ by induction over the structure of $\psi$ as follows. For the \textbf{base case:} If $\psi=\top$ and $\psi=p$ ($p$ $\in$ $\Ap$), are obvious. For the $\textbf{induction case}$, the cases of boolean combinations, $\psi= \neg \psi$, then $\psi$ was model checked, and it was found to be true. Thus, $\mathcal{M}, q$ $\not\models$ $\psi$. For  $\psi=\psi_1 \wedge \psi_2$, then $\psi_1$ and $\psi_2$ were model checked and at least one of them was found to be false. Therefore, $\mathcal{M}, q$ $\not\models$ $\psi$.  The induction step for the remaining obstruction operators is as follows: If  $\psi$ = $\naww{\sabotage_n^{\bowtie k}} \theta$ then \abbrv{Sat}$(\psi)$ = $\{q \in Q \ \mid \  \abbrv{Pr}_{\mathcal{M}, q}^{\strat}(\theta) \bowtie k \}$ by the \abbrv{Sat} definition. The proof for $X=\abbrv{Sat}({\psi_1})$, $\psi$ = $\varphi_1 \until^{\le m}  \varphi_2$ and $\psi$ = $\varphi_1 \until  \varphi_2$ then  $\mathcal{M}, s \not\models \psi$ is similar to the above case (similar for bounded and unbounded $\release$).

      
\end{proof}

The following theorem establishes the complexity of our model checking algorithms.

\begin{thm}\label{theoModelcheclTOL}
	The model checking problem of \abbrv{POTL} on \abbrv{POTS} is \abbrv{PTIME}
\end{thm}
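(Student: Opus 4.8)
The plan is to establish both a lower and an upper bound, matching the standard structure for model-checking complexity results. For the \textbf{lower bound}, I would observe that \abbrv{POTL} subsumes \abbrv{CTL}: taking the grade $n=0$ forces every $n$-strategy to return $\emptyset$ (since $T\subset\abbrv{E}(last(h))$ together with the cost bound still allows the empty set, and with $n=0$ only edge-subsets of total weight $0$ are permitted — in the worst case only $\emptyset$), so $\abbrv{Out}(q,\strat)$ is the full set of paths of the underlying \abbrv{PKS}. Combined with probability thresholds $\bowtie k$ with $k\in\{0,1\}$, the operator $\langle\sabotage_0^{>0}\rangle$ behaves like the \abbrv{CTL} path quantifier $\mathsf{E}$ and $\langle\sabotage_0^{\ge 1}\rangle$ like $\mathsf{A}$ over the qualitative fragment; hence \abbrv{PTIME}-hardness follows from \abbrv{PTIME}-hardness of \abbrv{CTL} model checking~\cite{CE81}. (Alternatively, one can reduce directly from \abbrv{OL}, which is already known to be \abbrv{PTIME}-hard~\cite{CLM23}.)

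For the \textbf{upper bound}, I would argue that Algorithm~\ref{alg:labeling} runs in polynomial time. The algorithm iterates over the at most $|\varphi|$ subformulas in $\abbrv{Sub}(\varphi)$, and for each it performs a polynomial amount of work: the Boolean cases are trivially linear in $|Q|$; the strategic case $\langle\sabotage_n^{\bowtie k}\rangle\theta$ requires computing $\abbrv{Pr}_{\mathcal{M},q}^{\strat}(\theta)$ for all $q$. Here the key observation is that the obstruction operators $\blacktriangleright(q,n,A)$ and $\blacktriangledown(n,A)$ are computable in time polynomial in $|Q|$ (each is essentially a scan over edges together with a cost summation, and $\abbrv{Pre}(A)$ is computed once), exactly as in the \abbrv{OL} algorithm of~\cite{CLM23}. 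Given the sets returned by $\blacktriangledown$, the probability computations reduce to the corresponding \abbrv{PCTL} computations over a modified transition matrix $\mathbf{P}'$: for $\prox$ a single matrix–vector product; for bounded until/release an $m$-fold iteration (polynomial since $m$ is given in unary, or the iteration stabilizes); for unbounded until/release the precomputation of $Q^{yes},Q^{no}$ via the backward-search fixpoints of Algorithm~\ref{alg:labeling1} and~\ref{alg:labeling2} (each a monotone fixpoint over subsets of $Q$, hence at most $|Q|$ iterations), followed by solving a linear system $\mathbf{A}\cdot x=b$ of size $|Q|$, which is in \abbrv{PTIME} by Gaussian elimination or the Power method~\cite{EVR62}. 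Summing over all subformulas yields a polynomial bound overall.

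The \textbf{main obstacle} I anticipate is justifying that replacing the min-based recurrences (e.g. $\min_{q'\in\blacktriangledown(n,\cdot)}\mathbf{P}(q',q)\cdot\mathcal{X}_q^{m-1}$) by the matrix-based linear-algebraic machinery of \abbrv{PCTL} is legitimate — i.e. that the optimal demonic $n$-strategy can be chosen memoryless and that the resulting fixpoint/linear system has a unique solution computable in \abbrv{PTIME}. This is precisely the content of the Correctness theorem above, so I would invoke it: once correctness is granted, each numerical step is a standard \abbrv{PCTL}-style computation whose \abbrv{PTIME} bound is inherited from~\cite{BaierKatoen08}. The remaining care is bookkeeping the encoding of the grade $n$ and the bound $m$ (both natural numbers in the formula) so that "polynomial in the size of the input" is honest; since $|\varphi|$ counts connectives and the numeric parameters appear in the formula, treating them in unary keeps the bounded-until unfolding polynomial, and the unbounded cases never depend on their magnitude. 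Therefore the combined bounds give \abbrv{PTIME}-completeness, and in particular the model-checking problem is no harder than that of \abbrv{PCTL}. \qed
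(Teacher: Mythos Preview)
Your upper-bound argument is essentially the same as the paper's: both proceed bottom-up over $\abbrv{Sub}(\varphi)$, bound the cost of $\blacktriangleright$ and $\blacktriangledown$ (linear resp.\ quadratic in $|Q|$), and reduce the strategic cases to \abbrv{PCTL}-style computations (a matrix--vector product for $\prox$, an $m$-fold iteration for bounded until/release, and backward fixpoints plus a linear system for the unbounded operators), yielding an overall bound of the form $O(|\varphi|\cdot|\mathcal{M}|\cdot m_{\max})$.

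Two remarks on where you diverge from the paper. First, you additionally argue \abbrv{PTIME}-hardness via an embedding of \abbrv{CTL} (or \abbrv{OL}); the paper's statement and proof concern \emph{membership} only --- it never claims or proves a matching lower bound --- so your proposal strictly strengthens the result. That is fine, but be aware your reduction goes through the \emph{qualitative} fragment of \abbrv{PCTL} (probabilities $>0$ and $\ge 1$) rather than \abbrv{CTL} directly, and that ``$\ge 1$'' on a Markov chain is not literally ``all paths''; you would need to pick the underlying \abbrv{PKS} so that the two coincide on the relevant formulas (e.g.\ by making the chain essentially deterministic, or by invoking the standard equivalence for $\omega$-regular objectives). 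Second, on the encoding of $m$: you assume unary, while the paper counts $\log(m)$ toward $|\varphi|$ yet still reports $O(|\varphi|\cdot|\mathcal{M}|\cdot m_{\max})$. Your hedge (``$m$ in unary, or the iteration stabilizes'') is the honest way to state it; the paper is looser on this point.
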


\begin{proof} (Sketch). Algorithm~\ref{alg:labeling} shows a procedure for model checking \abbrv{POTL}, which manipulates a set of states of $Q$. The procedure is inspired by the model checking for \abbrv{OL}~\cite{CattaLM23}, \abbrv{PCTL}~\cite{HAJ94} and \abbrv{ATL}~\cite{AHK02}. However, we use two additional procedures $\blacktriangleright$ and $\blacktriangledown$ linked to the pre-image function $\abbrv{Pre}$. In detail, our algorithm uses the following functions:
   \begin{itemize}
    \item The function $\abbrv{Sub}$ returns an ordered sequence, w.r.t. their complexities, of syntactic sub-formulas of a given formula $\varphi$.
    \item The function $\abbrv{Pre}$ is the same as for $\abbrv{OL}$~\cite{CattaLM23}.
    \item The function $\blacktriangleright(q,n,A)$ takes in input a state $q$, a natural numbers $n$, and a subset of states $A$. Such a function returns true if $(\sum_{q'\in A} \mathsf{C}( \tuple{q, q'}))<n$. If we represent the graph via an adjacent matrix,  we can calculate such function in a  linear number of steps w.r.t. the size of $A$. 
   \item The function $\blacktriangledown(n,A)$ takes in input a natural number $n$ and a subset of states $A$. The function returns the subset $A'$ of $\abbrv{Pre}(A)$, such that $\blacktriangleright(q',n,\overline{A})$ for all $q'\in A'$. The worst possible case is when $\abbrv{Pre}(A)=Q$, and  one needs to call $|Q|$-times the function $\blacktriangleright$. So, we are quadratic in $Q$, i.e. polynomial.
    \end{itemize}

 Algorithm~\ref{alg:labeling} works bottom-up on the structure of the formula, the cases of interest are for strategic formulas. 
 For $\varphi=\naww{\sabotage_n^{\bowtie m}}\prox\varphi$, the procedure calls function $\blacktriangledown(n,\abbrv{Sat}(\varphi))$ to compute the subset of set of states of $\abbrv{Pre}(\abbrv{Sat}(\varphi_1))$ that are bound to end up in satisfaction set. As regard $\varphi=\naww{\sabotage_n^{\bowtie m}} (\varphi_1 \until \varphi_2)$, the procedure computes the least fixed-point. We observe that, since it is monotone, such a fixed-point always exists.  A similar reasoning can be done for $\varphi=\naww{\sabotage_n^{\bowtie m}} \varphi_1 \until^{\le m} \varphi_2$, $\varphi=\naww{\sabotage_n^{\bowtie m}} \varphi_1 \release^{m} \varphi_2$ and $\varphi=\naww{\sabotage_n^{\bowtie m}} \varphi_1 \release \varphi_2$. From the above, our procedure runs in polynomial-time in the size of the model and formula, 
where parameter sizes are defined as follows. The size of $\mathcal{M}$, is denoted by $|\mathcal{M}|$ and  the size of a state formula $\varphi$, denoted by $|\varphi|$, is equal to the number of logical connectives and temporal operators in $\varphi$ plus the sum of the size $log(m)$ of each bounded temporal operators $\until^{\le m}$ occurring  in $\varphi$ and the function $\blacktriangledown(n,A)$.
Therefore, checking whether a model $\mathcal{M}$ satisfies formula $\varphi$, which depends on the size of $\varphi$ and is at most $O(\vert  \varphi \vert \cdot  \vert \mathcal{M} \vert \cdot m_{max})$, where $m_{max}$  is the maximal step bound that appears in a subformula $\psi_1 \until^{\le m} \psi_2$ of $\varphi$ and if $m_{max}$ = 1,  then $\varphi$ does not contain a step-bounded until operator. Termination of such procedure is guaranteed, as the state space $Q$ is finite.

\end{proof}

\section{Illustration Example}\label{sec:example}
Probability theory is well-suited for cybersecurity risk analysis because it provides a framework for understanding and quantifying uncertainty. To illustrate this, we will consider the following general cybersecurity scenario.  Let $\mathcal{G}$ be an \abbrv{AG} and we want to check if there are \abbrv{MTD} response strategies that will satisfy certain security goals.

\begin{figure}[ht]
  \centering
  \includegraphics[width=57mm, height=20mm]{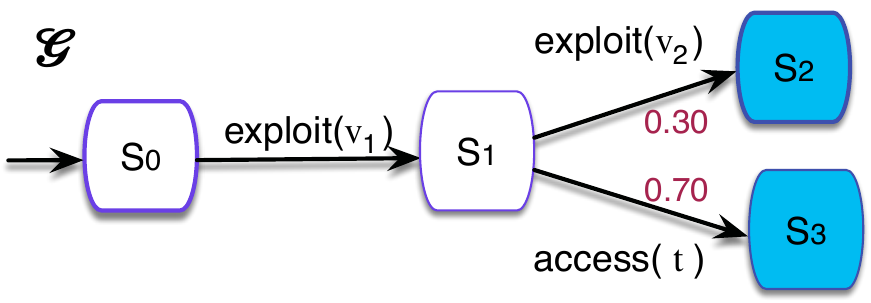}
    \vspace{-4pt}
  \caption{Example of an \abbrv{AG} $\mathcal{G}$ from~\cite{IMZ16}.}
  \label{fig:exp}
 \end{figure}
Consider the \abbrv{AG} in Fig.~\ref{fig:exp} with four states: $S_0$, $S_1$, $S_2$, and $S_3$. Each state represents a state of the attacker in the system. If the attacker is in $S_0$ or $S_1$, he can perform one or two of the following actions: exploit vulnerability $v_1$, exploit vulnerability $v_2$, and access device $t$. If the attacker succeeds in exploiting $v_1$, he will transition to state $S_1$. Here, we assume that depending on the attacker's preferences, there are 70$\%$ chance that the attacker will attempt to access equipment $t$ and a 30$\%$  chance that he will attempt to exploit $v_2$.  

\begin{table}[ht]
    \centering
    \begin{tabular}{|c|c|c|c|c|}
    \hline 
        Action & Countermeasure & Cost  & Efficiency \\
        \hline
        exploit($v_1$) & $c_1$ & 5 &  47.5$\%$\\
        access($t$) & $c_2$ & 1 &  22.5$\%$ \\
         exploit($v_2$)  & $c_3$ & 3 & 24.7$\%$ \\
         \hline
    \end{tabular}
    \vspace{-4pt}
    \caption{Actions and Attack countermeasure}
    \label{tab:my_label}
\end{table}

In Table~\ref{tab:my_label}, there are the three possible actions the attacker can deploy, with their respective countermeasures, cost, and effectiveness. Let Fig.~\ref{fig:exp1} depict the \abbrv{POTS} $\mathcal{M}$, constructed using the information from the attack graph presented in ~\cite{IMZ16}. Notice that, in contrast to ~\cite{IMZ16}, here we remove the actions because we do not have any actions in our \abbrv{POTS} model. Therefore, the probabilities present in each state of the model are divided by the number of outgoing actions of that state.
In Fig.~\ref{fig:exp1} the yellow line (do nothing), indicates that no countermeasure will be deployed. The red lines ($c_1$ in Table~\ref{tab:my_label}), refer to a defensive countermeasure aimed at protecting the system against the attack attempt. However, $c_1$ has an efficiency of 47.5$\%$. Therefore, an attacker attempting to exploit$(v_1)$ has a 5$\%$ chance of success. The violet lines ($c_2$) are a defensive countermeasure against accessing equipment $t$ and have an efficiency of 22.5$\%$. The orange lines ($c_3$) are a defensive countermeasure against exploiting vulnerability $v_2$ and have an efficiency of 24.7$\%$. Finally, green lines refer to the deployment of countermeasures $c_2$ and $c_3$ at the same time. Let us take the case where the defender chooses to deploy the countermeasure $c_3$ (orange lines) in state $S_1$, the attacker can either succeed or fail in his attack attempt. The efficiency of $c_3$ is 24.7$\%$. Therefore, the probability that the attacker fails in his attack attempt is 0.07425 (exploit$(v_2)$ $\times$ efficiency$(c_3))$. Otherwise, the probability of success is of 0.00075. 
 
\begin{figure}[ht]
  \centering
  \includegraphics[width=68mm, height=42mm]{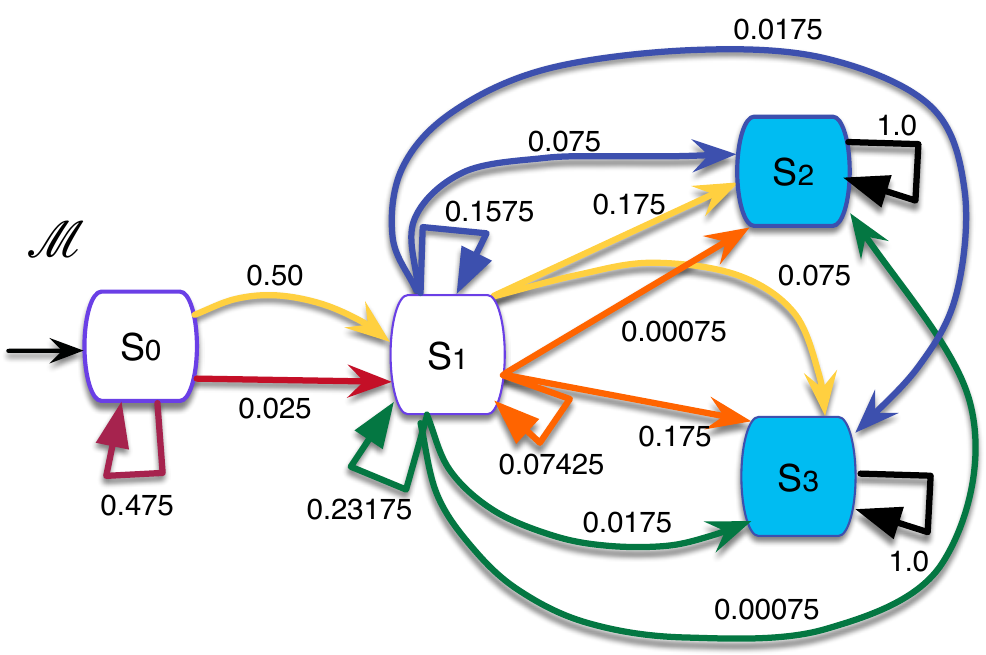}
    \vspace{-4pt}
  \caption{The \abbrv{POTS} $\mathcal{M}$ from $\mathcal{G}$.}
  \label{fig:exp1}
 \end{figure}

Let $r_2$ and $r_3$ be the atomic propositions for the states, $S_2$ and $S_3$. We can express, via \abbrv{POTL} formulas, the following security objective: 
\begin{itemize}
     \item
\textit{There
is a defender strategy with a cost $4$ such that the attacker reaches the state satisfying $r_2$ or the state satisfying $r_3$ with a probability less than a given threshold $0.1$}. \textit{The following \abbrv{POTL} formula captures the objective}: $\varphi_1:= \naww{\sabotage_{4}^{^{< 0.1}}} \eventualy  (r_2 \vee r_3)$. 
\item \textit{There exists a defender strategy with cost 5 such that the probability that the attacker reaches state satisfying $r_3$ is less than 0.2}. The following \abbrv{POTL} formula captures the objective: $\varphi_1:= \naww{\sabotage_{5}^{^{< 0.2}}}\eventualy r_3$. 

\end{itemize}
\section{Related Work}\label{sec:relwork}

There are some papers that have focused on the strategic capabilities of agents playing within dynamic game models. In this section, we compare our approach with them. 

\textbf{Non-Probabilistic Games and Strategic Logics.} Some research related to sabotage games has been introduced by van Benthem to study the computational complexity of a special class of graph reachability problems in which an agent has the ability to delete edges ~\cite{vanBenthem2005, aucherSML}. 
%
Sabotage Modal Logic (\abbrv{SML}) was introduced by~\cite{vanBenthem2005} to reason about sabotage games. The model checking problem for the sabotage modal logic is \abbrv{PSPACE-complete}~\cite{Lding2003ModelCA}. Our version of the games is not comparable to the sabotage games, because we provide the possibility to temporarily select subsets of edges, while in the sabotage games, the saboteur can only delete one edge at a time. In this respect, our work is related to~\cite{CattaLM23}, where the authors use an extended version of sabotage modal logic, called Subset Sabotage Modal Logic (\abbrv{SSML}), which allows for the deactivation of certain subsets of edges of a directed graph. The authors show that the model checking problems for such logics are decidable. 
Also, we recall that \abbrv{SSML} is an extension of \abbrv{SML}, but does not include temporal operators. Also, neither \abbrv{SML} nor \abbrv{SSML} takes into account quantitative information about the cost of edges, as we do.
In \cite{DynamicVadim} Dynamic Escape Games (\abbrv{DEG}) have been introduced.  
In a \abbrv{DEG}, an agent can inhibit edges but only reachability objectives have been studied. In \cite{CLM23} has been introduced Obstruction Logic which allows reasoning about two-player games played on weighted directed graphs. 
However, all these logics do not include quantitative information about probability and temporal operators.

%


\textbf{Probabilistic Games and Strategic Logics.}
Several papers consider the verification of stochastic games using probabilistic logics. In particular, when agents play deterministic strategies (as in \abbrv{PSL} \cite{AKM19}) and probabilistic knowledge (as in \abbrv{PATL} and $\abbrv{PATL}^{*}$ \cite{HXL19}). These logics are extensions of the Alternating-
time Temporal Logics \abbrv{ATL} and $\abbrv{ATL}^{*}$ \cite{AHK02}  and can be used to reason about the probabilistic knowledge and the probabilistic strategy in stochastic game systems.
 In \cite{SZT19}, the model checking problem has been studied for probabilistic alternating time $\mu$-calculus. \cite{HXK12} consider the logic Probabilistic $\abbrv{PATL}^{*}$ under incomplete information and synchronous perfect recall. \abbrv{PATL} has also been studied with incomplete information and memoryless strategy \cite{BJM23}, and with cumulative costs/rewards \cite{CFM12}. In the context of \abbrv{MAS}, probabilistic logic has been used to verify unconstrained parameterized systems, a fragment of $\abbrv{PATL}^{*}$ called $\abbrv{P[ATL}^{*}\abbrv{]}$ \cite{LPE20}, constrained resource systems (Probabilistic Resource-Bounded \abbrv{ ATL} (\abbrv{pRB-ATL}) \cite{NHR19}, and under assumptions about adversarial strategies, an extension of \abbrv{ATL} with probability success (\abbrv{pATL}) \cite{NWJ09}.  However, none of these logics combine probabilistic settings with dynamic models.

\section{Conclusions}\label{sec:end}

In this paper, we presented \abbrv{POTL}, a logic that allows reasoning about probabilistic two-player games temporal goals, where one of the players has the power to locally and temporarily modify the game structure. We proved that its model checking problem is in \abbrv{PTIME}. We also showed how \abbrv{POTL} expresses cybersecurity properties in a suitable way. Several directions we would like to explore for future work. A possible extension would be to consider probabilistic games with many players, between a demon and \emph{coalitions} of travelers. 
Such an extension would have the same relationship with the \abbrv{PATL} logic as \abbrv{TOTL} has with \abbrv{TCTL}. Another extension could be to introduce imperfect information in our setting. Unfortunately, this context is generally non-decidable \cite{DimaT11}. To overcome this problem, we could use an approximation to perfect information \cite{BelardinelliFM23}, a notion of bounded memory \cite{BelardinelliLMY22}, or some hybrid technique \cite{FerrandoM22,FerrandoM23}.

\bibliography{MainAAAI}

\end{document}